\documentclass[journal]{IEEEtran}
\usepackage{mathtools}
\usepackage{subfigure}
\usepackage{bm}
\usepackage{graphicx}
\usepackage[utf8]{inputenc}
\usepackage{threeparttable}  
\usepackage{cite}
\usepackage{soul}
\usepackage{amsmath,amssymb,amsfonts}
\usepackage{svg}
\usepackage{algorithmic}
\usepackage{textcomp}
\usepackage{amsthm}
\usepackage{mathrsfs}
\usepackage{algorithm}
\usepackage{color,xcolor}
\usepackage{booktabs}  
\usepackage{siunitx}

\theoremstyle{definition}

\newtheorem{remark}{Remark}
\newtheorem{theorem}{Theorem}
\newtheorem{lemma}{Lemma}
\newtheorem{prop}{Proposition}
\usepackage{multirow}
\makeatletter

\newcommand{\Rmnum}[1]{\expandafter\@slowromancap\romannumeral #1@}
\makeatother
\newlength{\cellwidth}
\settowidth{\cellwidth}{$\quad \ \ $}
\usepackage{ifpdf}

\DeclareMathOperator*{\Dg}{\mathsf{diag}}

\usepackage[utf8]{inputenc}
\usepackage[english]{babel}

%
\ifCLASSINFOpdf
\else
\fi
%
%
\usepackage{array}
\hyphenation{op-tical net-works semi-conduc-tor}

\begin{document}
%
\title{Distributed Link Removal Strategy for Networked Meta-Population Epidemics and Its Application to the Control of the COVID-19 Pandemic}
%
%
%

\author{{Fangzhou~Liu$^{1}$, Yuhong~Chen$^{1}$, Tong Liu$^{1}$, Zibo Zhou$^{1}$, Dong~Xue$^{2}$, and Martin~Buss$^{1}$}
\thanks{$^{1}$F. Liu, Y. Chen, T. Liu, Z. Zhou, and M. Buss are with the Chair of Automatic Control Engineering (LSR), 
Department of Electrical and Computer Engineering, Technical University of Munich, 
Theresienstr. 90, 80333, Munich, Germany; \texttt{\{fangzhou.liu, yuhong.chen, tong.liu, ga84sih, mb\}@tum.de}}  
\thanks{$^{2}$D. Xue is with the Key Laboratory of Advanced Control and Optimization for Chemical Processes, East China University of Science and Technology, Shanghai 200237, China; \texttt{dong.xue@ecust.edu.cn}}}

\maketitle

\begin{abstract}
 In this paper, we investigate the distributed link removal strategy for networked meta-population epidemics. In particular, a deterministic networked susceptible-infected-recovered (SIR) model is considered to describe the epidemic evolving process. In order to curb the spread of epidemics, we present the spectrum-based optimization problem involving the Perron-Frobenius eigenvalue of the matrix constructed by the network topology and transition rates. A modified distributed link removal strategy is developed such that it can be applied to the SIR model with heterogeneous transition rates on weighted digraphs. The proposed approach is implemented to control the COVID-19 pandemic by using the reported infected and recovered data in each state of Germany. The numerical experiment shows that the infected percentage can be significantly reduced by using the distributed link removal strategy. 
\end{abstract}

\begin{IEEEkeywords}
 distributed link removal strategy, networked meta-population epidemics, COVID-19 pandemic
\end{IEEEkeywords}

\IEEEpeerreviewmaketitle

\section{Introduction} \label{sec::intro}
Various models have been proposed to mathematically characterize the spread of epidemics \cite{mei_arc_2017,nowzari_ics_2016}. Among others, the compartmental models, e.g., the susceptible-infected-susceptible (SIS) model and the susceptible-infected-recovered (SIR) model, play the fundamental role. One important class of the compartmental models are the scalar deterministic models, which can be referred to in the survey \cite{hethcote_siam_2000}. These models have been widely investigated and qualitatively characterize the macroscopic behavior of the dynamics of infectious diseases, for example, the COVID-19 pandemic \cite{shi_medrxiv_2020,wu_lancet_2020}. However, the drawback of the scalar models is that they are based on the hidden assumption that there exists a well-mixed population, i.e., individuals have the same chances to interact with each other. In fact, this assumption introduces not only the homogeneity in network structure but also in individual behaviors, which does not generally hold in the globalized world with close connection via, for instance, face-to-face social networks and traffic networks. Both of these heterogeneities, nonetheless, play significant roles in shaping the epidemic spreading process. This brings us to the network epidemic models, where the nodal dynamics are considered. There are two kinds of interpretations of the network epidemic models: (a) the disease spreads on a network where each node represents one individual and (b) the disease spreads on a network of interconnected sub-population (groups of population), i.e., meta-population. Clear, the meta-population interpretation provides an efficient and comprehensive way of depicting pandemics which breaks out would-wide and spreads rapidly in communities. Thus, in this paper, we investigate the control strategy for networked meta-population epidemics. 

In general, from the perspective of network systems, control strategies for network epidemics are categorized into node manipulation and edge manipulation. Previous literatures mainly focus on node manipulation, especially solving the resource allocation problem \cite{liu_tcns_2020,liu_physa_2019,nowzari_tcns_2017} by interacting with transition rates. Although they manage to control the disease spreading process with respect to certain optimal criteria, how to explicitly implement the control signal, which is associated with the modeling of the impact of the resources, remains to be explored. Instead, the edge manipulation has been widely applied in the world to curb the epidemic spreading. For example, city lock-down can be regarded as cutting off all the connections in the graph. 
By using link-removal strategy, the spectral radius of the graph can be decreased such that it is below the epidemic threshold \cite{mieghem_ton_2009} resulting in the disease-free equilibrium of the network epidemic model. However, it is a combinatorial and NP-hard problem to optimally design the network topology on which the epidemic spreads \cite{mieghem_pre_2011}. Recently, Xue and Hirche propose an algorithm combining power iteration (PI), max-consensus, and event-trigger optimization to approximately solve this problem in a distributed manner \cite{xue_tsp_2019}. Their method requires the network to be undirected. Nonetheless, real epidemic spreading network is generally directed or at least bi-directed based on the fact of asymmetric traffic flow between locations. In addition, the epidemic threshold adopted in \cite{mieghem_pre_2011,xue_tsp_2019} is rooted in an epidemic model with homogeneous transition rates, i.e., the transition rates of each sub-population are identical. Thus new criterion based on epidemic models with heterogeneous transition rates \cite{liuj_tac_2019} needs to be introduced. 

The main contribution of this paper is to develop a distributed link removal strategy to control networked SIR meta-population epidemics. We extend the algorithm in \cite{xue_tsp_2019} such that it is applicable for meta-population epidemics with heterogeneous transition rates on weighted digraphs. This distributed strategy enjoys the advantage of locally retrievable information and no centralized decision-maker.
Besides, from practical point of view, we implement the proposed algorithm to curb the spread of COVID-19 pandemic by using real data. We identify the infection matrix (infection rates and adjacency matrix of the network) and the curing rates by using the reported infection and recovered cases of each state of Germany. Built upon the identified infection matrix and curing rates, we show that the infected percentage of Germany can be significantly reduced by applying the proposed algorithm.

\emph{Notations:} Let $\mathbb{R}$, $\mathbb{N}$, and $\mathbb{N}_{\geq 0}$ be the set of real numbers, nonnegative integers, and positive integers, respectively. 
Given a matrix $M \in \mathbb{R}^{n \times n}$, $\lambda_i(M)$ is the $i$th largest eigenvalue of $M$ sorted in the decreasing order $|\lambda_1(M)| \geq |\lambda_2(M)| \geq \ldots \geq |\lambda_n(M)|$, $\rho(M)$ is the spectral radius of $M$, i.e., $\rho(M) = \max_{i}|\lambda_i(M)|$. Let $\mathsf{Re}(\lambda)$ be the real part of the eigenvalue $\lambda$. $\alpha(M)$ denotes the largest real part of $M$'s eigenvalues, i.e., $\alpha(M) = \max_i \mathsf{Re}(\lambda_i(M))$. For a matrix $M \in \mathbb{R}^{n \times r}$ and 
a vector $a \in \mathbb{R}^n$, $M_{ij}$ and $a_{i}$ denote the element in the $i$th row and $j$th column and the $i$th entry, respectively. For any two vectors $a, b \in \mathbb{R}^n$, $a \gg (\ll) b$ represents that $a_i >(<) b_i$, for all $i=1,\ldots,n$; $a > (<) b$ means that $a_i \geq (\leq) b_i$, for all $i=1,\ldots,n$ and $a \neq b$; and $a \geq (\leq) b$ means that $a_i \geq (\leq) b_i$, for all $i=1,\ldots,n$ or $a = b$. These component-wise comparisons are also applicable for 
matrices with the same dimension. Vector $\mathbf{1}$ ($\mathbf{0}$) represents the column vector of all ones (zeros) with appropriate dimensions. $I_n$ stands for the identity matrix of order $n$ and $e_i$ is the $i$th column of $I_n$.

\section{Problem Formulation} \label{sec::prob_form}
In this section, we recall some necessary notions from graph theory, introduce the meta-population SIR model, and provide the problem formulation.

\subsection{Preliminaries} \label{subsec::preliminaries}
We consider a social network described by a weighted directed graph $\mathcal{G}(\mathcal{V},
\mathcal{E}, W)$ with $n$ $(n \geq 2)$ nodes, where $\mathcal{V}=\{1,2,\dots,n\}$ and $\mathcal{E} \subseteq \mathcal{V} \times \mathcal{V}$ are the sets of nodes and edges, respectively. 
The adjacency matrix $W=[w_{ij}] \in \mathbb{R}^{n \times n}$ is nonnegative and with 
zero diagonal entries. For two distinct nodes, $w_{ij} >0$ if and only if there exists 
a link from node $j$ to $i$, i.e., $(j,i) \in \mathcal{E}$. For the convenience of further presentation, the in-neighborhood of node 
$i$ is also introduced as
\begin{equation}
\mathcal{N}^{\text{in}}_i=\{j:w_{ij}>0, j \in \mathcal{V}\}.
\end{equation}
In this article, we confine ourselves that the graph $\mathcal{G}$ is strongly 
connected, i.e., $W$ is irreducible. We then introduce the Perron-Frobenius Theorem for irreducible nonnegative matrix.
\begin{lemma}{\cite[Theorem 2.7]{varga_springer_2000}} \label{lemma::perron_fr}
 Given that a square matrix $M$ is an irreducible nonnegative matrix. The following statements hold:
 \begin{itemize}
 	\item [(i)] The largest eigenvalue of $M$, $\lambda_1(M)$, is a positive real eigenvalue equal to its spectral radius $\rho(M)$.
 	\item [(ii)] $\rho(M)$ is a simple eigenvalue of $M$.
 	\item [(iii)] There exist a unique right eigenvector $y \gg \mathbf{0}$ and a unique left eigenvector $z^{\top} \gg \mathbf{0}^{\top}$ corresponding to $\rho(M)$.
 \end{itemize} 
\end{lemma}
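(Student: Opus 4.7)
The plan is to follow the classical three-step proof of Perron-Frobenius: first produce a nonnegative eigenvector whose eigenvalue equals the spectral radius, then upgrade to strict positivity using irreducibility, and finally handle simplicity and uniqueness with a ``zero-entry'' subtraction trick.

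For existence, I would work on the probability simplex $\Delta = \{x \geq \mathbf{0} : \mathbf{1}^{\top} x = 1\}$ and apply a Brouwer-type fixed-point argument to the map $x \mapsto Mx / (\mathbf{1}^{\top} Mx)$ (with a small perturbation $M + \epsilon \mathbf{1}\mathbf{1}^{\top}$ and a limit as $\epsilon \downarrow 0$ to handle the possibility that $Mx$ vanishes on the boundary). This yields $y \geq \mathbf{0}$ with $My = \lambda y$ for some $\lambda \geq 0$. An equivalent and more quantitative route is the Collatz-Wielandt characterization $\lambda = \sup_{x > \mathbf{0}} \min_{i:\, x_i > 0} (Mx)_i / x_i$; combined with the componentwise inequality $|M w| \leq M |w|$ applied to an arbitrary (possibly complex) eigenvector $w$ of $M$, this also identifies $\lambda$ with $\rho(M)$, giving statement (i) up to the positivity of $\lambda$.

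For (iii) and the remaining part of (i), I would invoke the identity $(I + M)^{n-1} \gg 0$, which is equivalent to strong connectivity of $\mathcal{G}$. Since $(I + M)^{n-1} y = (1 + \lambda)^{n-1} y$ and the left-hand side is strictly positive whenever $y \geq \mathbf{0}$ is nonzero, we conclude $y \gg \mathbf{0}$; applying $M$ once more gives $My \gg \mathbf{0}$, hence $\lambda > 0$. Because $M^{\top}$ is also irreducible, the same argument produces a strictly positive left eigenvector $z^{\top} \gg \mathbf{0}^{\top}$ at the same eigenvalue $\rho(M)$.

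For simplicity in (ii), geometric simplicity is the easy half: if $M u = \rho(M) u$ for some real $u$ not parallel to $y$, pick $c$ so that $y - c u \geq \mathbf{0}$ has at least one zero entry while being nonzero; applying the positivity argument above to this nonnegative eigenvector then forces a contradiction, so $u$ must be a scalar multiple of $y$. Algebraic simplicity is the delicate step; the cleanest route I know goes through the adjugate, showing that $\mathrm{adj}(\rho(M) I - M)$ has rank one, is a positive multiple of $y z^{\top}$, and in particular satisfies $z^{\top} \mathrm{adj}(\rho(M) I - M) y > 0$, which is (up to sign) the derivative of the characteristic polynomial at $\rho(M)$ and hence rules out a Jordan block. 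The uniqueness in (iii) then follows from the same zero-entry subtraction trick applied to any two positive eigenvectors. The main obstacle I expect is the algebraic-simplicity step: the positivity and geometric-simplicity arguments are clean direct applications of irreducibility, but ruling out a Jordan block of size larger than one at $\rho(M)$ requires either the adjugate computation above or an equivalent argument via Frobenius' normal form of $M / \rho(M)$.
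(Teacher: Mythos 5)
The paper does not prove this lemma at all: it is imported verbatim as Theorem 2.7 of Varga's \emph{Matrix Iterative Analysis} and used as a black box, so there is no in-paper argument to compare yours against. Your outline is the standard classical proof of the Perron--Frobenius theorem for irreducible nonnegative matrices (Brouwer/Collatz--Wielandt for existence and the identification with $\rho(M)$, the $(I+M)^{n-1}\gg 0$ trick for strict positivity and $\lambda>0$, the zero-entry subtraction for geometric simplicity and uniqueness, and the adjugate computation for algebraic simplicity), and it is correct in all essential respects. Two small points of precision: the derivative of the characteristic polynomial $p(t)=\det(tI-M)$ at $\rho(M)$ is $\operatorname{tr}\bigl(\mathrm{adj}(\rho(M)I-M)\bigr)$, not $z^{\top}\mathrm{adj}(\rho(M)I-M)\,y$; once you know the adjugate is a positive multiple of $yz^{\top}$ the trace equals that multiple times $z^{\top}y>0$, so your conclusion stands, but the quantity you named is not literally the derivative. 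In the geometric-simplicity step you should also say a word about why it suffices to consider real eigenvectors $u$ (split a complex eigenvector into real and imaginary parts, each of which is a real eigenvector at the real eigenvalue $\rho(M)$) and about the choice of sign of $c$ when $u$ has mixed signs; both are routine but are exactly the places where a written-out version must be careful. Since the paper only needs the statement, not its proof, none of this affects the paper, but as a self-contained proof sketch yours would pass with those two clarifications.
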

Since the adjacency matrix $W$ of the strongly connected graph $\mathcal{G}$ is nonnegative irreducible, Lemma~\ref{lemma::perron_fr} can be directly applied. In the remaining of this article, we denote positive vectors $y=[y_1,y_2,\ldots,y_n]^{\top}$ and 
$z^{\top}=[z_1,z_2,\ldots,z_n]$ the right and left eigenvector corresponding to $\rho(W)$.

\subsection{Meta-Population Susceptible-Infected-Recovered Model}
Consider epidemics spreading on a weighted directed graph $\mathcal{G}(\mathcal{V},
\mathcal{E}, W)$. The dynamics of each group $i \in \mathcal{V}$ satisfies 
the meta-population SIR model as follows 
\begin{equation} \label{eq::sir}
 \begin{aligned}
  \dot{x}_i(t) &= (1-x_i(t)-r_i(t)) \sum_{j=1}^{n} \beta_{i} w_{ij} x_j(t) - \delta_i x_i(t) \\
  \dot{r}_i(t) &= \delta_i x_i(t),
 \end{aligned}
\end{equation}
where $x_i(t), r_i(t) \in \mathbb{R}$ represent the proportions of infected (I) and recovered (R) cases in group $i$ at time instant $t$, respectively. $\beta_i,\delta_i > 0$ are the infection and curing rate of group $i$, respectively.  Since the transition rates are generally different for each group and the groups are not well-mixed, the model in~\eqref{eq::sir} characterizes the heterogeneity of the epidemic spreading process. From practical point of view, we adopted the SIR model described by $x_i$ and $r_i$, because the infected and recovered cases are regularly reported while the proportion of susceptible individuals in population can be hardly known. In addition, the dynamics of $i$th group's proportion of susceptible cases $s_i(t)$ can be omitted in the SIR model~\eqref{eq::sir} in light of the fact that 
$s_i(t) + x_i (t) + r_i(t) \equiv 1$ for all $i \in \mathcal{V}$ and $t \geq 0$. Furthermore, it is desirable that the states in the model~\eqref{eq::sir} stay in the following simplex.
\begin{equation}
    \Delta = \{(a,b) : a,b \geq 0, a+b \leq 1 \}.
\end{equation}

Let $x(t) = [x_1(t),x_2(t),\ldots,x_N(t)]^{\top}$ and $r(t) = [r_1(t),r_2(t),\ldots,r_N(t)]^{\top}$ be the stacked infection proportions and recovering proportions, respectively. Let $\beta = [\beta_1,\beta_2,\ldots,\beta_N]^{\top}$ and $\delta = [\delta_1,\delta_2,\ldots,\delta_N]^{\top}$. By denoting $X(t) = \Dg(x(t))$, $R(t) = \Dg(r(t))$, $B = \Dg(\beta)$, and $D = \Dg(\delta)$, the compact form of the meta-population SIR model~\eqref{eq::sir} reads
\begin{equation} \label{eq::sir_mat}
    \begin{aligned}
     \dot{x}(t) &= (1-X(t)-R(t)) BW x(t) -D x(t) \\
     \dot{r}(t) &= -D x(t).
    \end{aligned}
\end{equation}
The SIR model is considered in this article due to its wide application in the describing epidemic spreading process. Nonetheless, the results and algorithms in the following sections can be straightforwardly extended to other compartmental models, e.g., SI, SIS, SIRS, SEIR. 

In the scenarios of epidemic curbing and rumor mitigation, the disease-free case, i.e., $x=\mathbf{0}$, is of great significance. The following lemma collects the behavior of the 
meta-population SIR model~\eqref{eq::sir_mat}.
\begin{lemma} \label{lemma:sir}
 Consider the meta-population SIR model~\eqref{eq::sir_mat} with positive transition rates on a strongly connected weighted digraph $\mathcal{G}(\mathcal{V}, \mathcal{E}, W)$. The following statement hold:
 \begin{enumerate}
     \item[(i)] If $(x_i(0),r_i(0)) \in \Delta$ for all $i \in \mathcal{V}$, there holds $(x_i(t),r_i(t)) \in \Delta$ for all $i \in \mathcal{V}$ and $t \geq 0$.
     \item[(ii)] The set of equilibrium points is the set of pairs $(\mathbf{0}_N,r^*)$, for any $r^* \in [0,1]^n$.
     \item[(iii)] If $\alpha(BW-D) \leq -\epsilon$ for some $\epsilon > 0$, $x(t)$ approaches $\mathbf{0}$ exponentially fast, i.e., $\|x(t)\| \leq \|x(0)\|K e^{-\epsilon t}$, for some $K > 0$. 
 \end{enumerate}
\end{lemma}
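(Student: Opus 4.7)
The three claims split cleanly, so I will handle them in the stated order. For (i), the plan is to check the vector field on each face of $\Delta^n$ in turn (a Nagumo-style invariance argument). On $\{x_i=0\}$ the first equation gives $\dot{x}_i=(1-r_i)\beta_i\sum_j w_{ij}x_j$, which is nonnegative once $r_i\le 1$ and the other $x_j$ are nonnegative; on $\{r_i=0\}$ one has $\dot{r}_i=\delta_i x_i\ge 0$; and on $\{x_i+r_i=1\}$ the substitution $s_i:=1-x_i-r_i$ yields the one-dimensional linear ODE $\dot{s}_i=-s_i\,\beta_i\sum_j w_{ij}x_j$ (with a nonnegative coefficient), which preserves $s_i\ge 0$. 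Interleaving the three checks shows $\Delta^n$ is forward invariant.

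For (ii), the argument is algebraic. At any equilibrium, $\dot{r}_i=\delta_i x_i=0$ with $\delta_i>0$ forces $x_i=0$ for every $i$; plugging $x=\mathbf{0}_n$ back into the first equation makes $\dot{x}_i=0$ automatic, leaving $r^*\in[0,1]^n$ free. So the equilibrium set is exactly $\{(\mathbf{0}_n,r^*):r^*\in[0,1]^n\}$.

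For (iii), I will pass to a linear comparison system. By (i), $(x(t),r(t))\in\Delta^n$, so each diagonal factor $1-x_i(t)-r_i(t)$ lies in $[0,1]$, and since $BW\ge 0$ and $x(t)\ge 0$, componentwise
\[
\dot{x}(t)=\bigl(I-X(t)-R(t)\bigr)BW\,x(t)-Dx(t)\le (BW-D)\,x(t).
\]
The matrix $BW-D$ is Metzler (nonnegative off-diagonal), so the flow of $\dot{y}=(BW-D)y$ is monotone and positivity-preserving; the standard comparison principle for cooperative systems gives $0\le x(t)\le y(t)$ with $y(0)=x(0)$. Since $\alpha(BW-D)\le -\epsilon$, there exists $K>0$ with $\|e^{(BW-D)t}\|\le K e^{-\epsilon t}$, hence $\|y(t)\|\le K e^{-\epsilon t}\|x(0)\|$, and because $0\le x(t)\le y(t)$ componentwise implies the same norm inequality, we conclude $\|x(t)\|\le K e^{-\epsilon t}\|x(0)\|$.

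The main obstacle is step (iii): the comparison inequality requires quasi-monotonicity of $(BW-D)$, which is what being Metzler buys us, and the exponential bound on $e^{(BW-D)t}$ needs a constant $K$ that absorbs any Jordan-block polynomial growth when $-\epsilon$ is not strictly separated from $\alpha(BW-D)$. Both are standard, but must be invoked cleanly; in particular, one should verify that $BW$ is indeed entrywise nonnegative (true because $B$ and $W$ are) before asserting the Metzler property, and should note that the bound $1-x_i-r_i\le 1$ used to drop the factor relies crucially on the simplex invariance from (i), so (iii) cannot be established in isolation from (i).
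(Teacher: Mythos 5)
Your proposal is correct and follows essentially the same route as the paper: a boundary check on the faces of $\Delta$ for (i), the immediate algebraic argument for (ii) (which the paper simply cites away to a reference), and the comparison with the linear system $\dot{y}=(BW-D)y$ for (iii). Your version is in fact slightly more careful than the paper's, since you make explicit the Metzler/quasi-monotonicity hypothesis needed for the comparison principle and the role of simplex invariance in dropping the factor $1-x_i-r_i$.
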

\begin{proof}
 Consider the dynamics of each node in~\eqref{eq::sir}. Assume that for some time instant $\tau \geq 0$, there hold $(r_i(\tau),x_i(\tau)) \in \Delta$ for all $i \in \mathcal{V}$. We then inspect the following three cases: (a) if $x_i(\tau) = 0$, then $\dot{x}_i(\tau) \geq 0$; (b) if $r_i(\tau) = 0$, then 
 $\dot{r}_i(\tau) \geq 0$; and (c) if $x_i(\tau) + r_i(\tau) = 1$, then 
 $\dot{x}_i(\tau)+\dot{r}_i(\tau) = -\delta_i x_i(\tau) \leq 0$. By combining the above three cases, we can obtain the statement (i).
 
 The statement (ii) has been proved in \cite{mei_arc_2017} and the proof is saved for triviality.
 
 We then prove the statement (iii). In light of the statement (i), we can obtain
 \begin{equation}
    \dot{x} < (BW-D) x.
 \end{equation}
 Thus by comparison principle \cite{Khalil_prentice_2002}, we only need to prove the auxiliary system $\dot{y} = (BW-D) y$ converges to $\mathbf{0}$ exponentially fast. It is straightforwardly true since there hold $\alpha(BW-D) \leq -\epsilon$. Thus we complete the proof.
\end{proof}
\begin{remark}
By Lemma \ref{lemma:sir}, it is straightforward that $\Delta$ is an invariant set for the infection and recovered proportions, given nonnegative transition rates. Note that distinct from the SIS model, the SIR model always converge to a disease-free case if the curing rate $\delta_i$ is positive. In this regard, what matters for the SIR model is not whether there will be healthy state, but how fast the disease dies out. By the statement (iii), the decay rate of the infection proportion is furnished by $\alpha(BW-D)$. Thus, we control $\alpha(BW-D)$ to curb the spread of the epidemics.
\end{remark}

\subsection{Link Removal Problem}
Given a weighted digraph $\mathcal{G}=\{\mathcal{V},\mathcal{E},W\}$, the link removal problem is formally described as follows: for a fixed budget $| \Delta \mathcal{E} |=r (r \in \mathbb{N}_+)$, select a set of edges $\Delta \mathcal{E}$ from $\mathcal{\mathcal{E}}$ to construct a new graph $\mathcal{G}_r = \{\mathcal{V},\mathcal{E} \setminus \Delta \mathcal{E}, W_r \}$, such that the exponential decay rate of the meta-population SIR model~\eqref{eq::sir_mat} with positive infection and curing rates is maximized, i.e.,
\begin{equation} \label{eq::opt_1}
 \begin{aligned}
    \max_{\Delta \mathcal{E} \subseteq \mathcal{E}} \quad &\epsilon \\
    \mathrm{s.t.} \quad & \alpha(B W_r-D) \leq -\epsilon \\
                 \quad  & |\mathcal{E} |= r.
 \end{aligned}
\end{equation}
Note that since $B$ and $D$ are diagonal matrices with positive diagonal entries, the optimization problem~\eqref{eq::opt_1} is equivalent to minimize $\alpha(D^{-1}BW_r)$, where the matrix $D^{-1}BW_r$ is irreducible nonnegative. For the convenience of presentation, we denote 
\begin{equation} \label{eq::def_1}
   \begin{aligned}
    & \Delta W = W - W_r, \quad A := D^{-1}BW,  \\
    & A_r := D^{-1}BW_r, \quad \Delta A_r := D^{-1}B \Delta W.
   \end{aligned}
\end{equation}
In light of Lemma \ref{lemma::perron_fr}, we can rewrite the optimization problems as follows
\begin{equation} \label{eq::opt_2}
 \begin{aligned}
    \min_{\Delta \mathcal{E} \subseteq \mathcal{E}} \quad & \lambda_1(A_r) \\
    \mathrm{s.t.}  \quad  & |\mathcal{E} |= r.
 \end{aligned}
\end{equation}
After labeling edge $(j,i) \in \mathcal{E}$ on graph $\mathcal{G}$ by $l_{ij}$, the
optimization problem~\eqref{eq::opt_2} can be reformulated as
\begin{equation} \label{eq::opt_3}
 \begin{aligned}
    \min_{m \in \{0,1\}^{|\mathcal{E}|} } \quad & \lambda_1(A- \Delta A_r) \\
    \mathrm{s.t.}  \quad  & \Delta A_r = \sum_{l_{ij}=1}^{|\mathcal{E}|} \beta_i m_{l_{ij}} e_i e_j^{\top} w_{ij} / \delta_i \\
                   \quad & \mathbf{1}^{\top} m = r,
 \end{aligned}
\end{equation}
where $m = [m_1,m_2,\ldots,m_{|\mathcal{E}|}]^{\top}$ with $m_{l_{ij}} = 1$ if the edge labeled as $l_{ij}$ is removed from $\mathcal{E}$ and $m_{l_{ij}} = 0$, otherwise.

\begin{remark}
 For epidemics with infection rate $\beta$ and curing rate $\gamma$, we have the reproduction number $R = \frac{\beta}{\gamma}$. If $R<1$, the smaller $R$ is, the faster the epidemic dies out. For meta-population SIR model~\eqref{eq::sir_mat}, the dominant eigenvalue of the matrix $A$ can be considered as the reproduction number which takes into consideration the influence of the network topology as well as the heterogeneous transition rates. 
\end{remark}

\section{Distributed Link Removal Strategy}
In this section, we propose a distributed algorithm to solve the link removal problem~\eqref{eq::opt_3} for the meta-population SIR epidemic model on weighted digraphs. By using eigenvalue-sensitivity-based approximation, we introduce the dominant left and right eigenvectors of the matrix $A$ to solve the problem in question. Then we design a distributed algorithm based on power iteration and max-consensus algorithm.   

\subsection{Eigenvalue-Sensitivity-Based Approximation}
\begin{prop} \label{prop::approx}
 The optimization problem~\eqref{eq::opt_3} can be approximately solved by 
 \begin{equation} \label{eq::opt_final}
  \begin{aligned}
    \min_{m \in \{0,1\}^{|\mathcal{E}|} } \quad & \Delta \lambda_1(A, \Delta A_r) \\
    \mathrm{s.t.}  \quad  & \Delta \lambda_1(A, \Delta A_r) = \sum_{l_{ij}=1}^{|\mathcal{E}|} \beta_i m_{l_{ij}} w_{ij} z_i y_j / \delta_i \\
                   \quad & \mathbf{1}^{\top} m = r.
  \end{aligned} 
 \end{equation}
\end{prop}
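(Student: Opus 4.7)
The plan is to invoke first-order eigenvalue perturbation theory (sensitivity analysis) for the dominant eigenvalue of $A$. Because $\mathcal{G}$ is strongly connected and the diagonal matrices $B,D$ have strictly positive entries, the matrix $A = D^{-1}BW$ appearing in \eqref{eq::def_1} is irreducible and nonnegative; Lemma~\ref{lemma::perron_fr} therefore guarantees that $\lambda_1(A) = \rho(A)$ is a simple eigenvalue with strictly positive right and left eigenvectors $y$ and $z^{\top}$. Simplicity is the essential input needed to differentiate the eigenvalue-eigenvector relation smoothly in the perturbation.

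Next, I would write $A_r = A - \Delta A_r$ and apply the classical first-order formula for a simple eigenvalue: differentiating $A(\epsilon)y(\epsilon) = \lambda_1(\epsilon) y(\epsilon)$ with $A(\epsilon) = A - \epsilon \Delta A_r$ and left-multiplying by $z^{\top}$ cancels the derivative of $y$ and yields
\begin{equation*}
\lambda_1(A - \Delta A_r) \;\approx\; \lambda_1(A) \;-\; \frac{z^{\top} \Delta A_r \, y}{z^{\top} y}.
\end{equation*}
Substituting the rank-one decomposition $\Delta A_r = \sum_{l_{ij}=1}^{|\mathcal{E}|} (\beta_i/\delta_i)\, m_{l_{ij}}\, w_{ij}\, e_i e_j^{\top}$ inherited from \eqref{eq::opt_3}, I would expand
\begin{equation*}
z^{\top} \Delta A_r\, y \;=\; \sum_{l_{ij}=1}^{|\mathcal{E}|} \frac{\beta_i}{\delta_i}\, m_{l_{ij}}\, w_{ij}\, (z^{\top} e_i)(e_j^{\top} y) \;=\; \sum_{l_{ij}=1}^{|\mathcal{E}|} \frac{\beta_i}{\delta_i}\, m_{l_{ij}}\, w_{ij}\, z_i\, y_j,
\end{equation*}
which is exactly the expression defining $\Delta\lambda_1(A,\Delta A_r)$ in \eqref{eq::opt_final}, modulo the strictly positive scalar $z^{\top} y$ that is independent of the decision variable $m$ and hence irrelevant to the argmin. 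Combined with the unchanged budget constraint $\mathbf{1}^{\top} m = r$, this rewrites \eqref{eq::opt_3} as the linear-in-$m$ program \eqref{eq::opt_final}.

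The main obstacle, and the reason the result is only an approximation, is that the perturbation $\Delta A_r$ is not infinitesimal once $r$ actual edges are removed: the first-order expansion discards a remainder of order $\|\Delta A_r\|^{2}$, and for large $r$ the genuine minimizer of \eqref{eq::opt_3} may deviate from the minimizer of \eqref{eq::opt_final}. A rigorous error bound would require controlling the second-order correction through the reduced resolvent of $A$ at $\lambda_1(A)$. A secondary subtlety is the sign/ordering convention: since $y,z\gg\mathbf{0}$ and $\beta_i,\delta_i,w_{ij}>0$, each admissible term in the sum is strictly positive, so the linearized problem reduces to a transparent top-$r$ selection rule over the weights $\beta_i w_{ij} z_i y_j/\delta_i$, and this is the structural property the subsequent distributed power-iteration/max-consensus algorithm is designed to exploit.
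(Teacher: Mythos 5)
Your proposal is correct and follows essentially the same route as the paper: invoke Lemma~\ref{lemma::perron_fr} to get simplicity of $\lambda_1(A)$ and positivity of $y,z$, apply the first-order eigenvalue perturbation formula $\lambda_1(A-\Delta A_r)\approx \lambda_1(A)-z^{\top}\Delta A_r y/(z^{\top}y)$, and expand $z^{\top}\Delta A_r y$ component-wise using the rank-one decomposition of $\Delta A_r$. The only cosmetic differences are that you derive the sensitivity formula by differentiating the eigenvalue relation rather than citing it, and you observe that the positive scalar $z^{\top}y$ is irrelevant to the argmin whereas the paper simply normalizes $z^{\top}y=1$.
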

\begin{proof}
 Since the graph $\mathcal{G}=\{\mathcal{V},\mathcal{E},W\}$ is strongly connected, $\lambda_1(A)$ is positive and simple by Lemma \ref{lemma::perron_fr}. In addition, the right and left (normalized) eigenvectors, $y$ and $z^{\top}$, are strictly positive, i.e., $y,z \gg \mathbf{0}$. According to the perturbation theory in \cite[p.183]{stewart_academic_1990}, the following expansion holds
 \begin{equation} \label{eq::derive_opt}
    \lambda_1(A-\Delta A_r) =\lambda_1(A) - \frac{z^{\top} \Delta A_r y}{z^{\top} y} + \mathcal{O}(\|\Delta A_r\|).
 \end{equation}
 For graphs $\mathcal{G}$ with a large spectral gap between $\lambda_1(W)$ and $\lambda_1(A)$, the higher order items can be neglected and the first-order approximation equals $\lambda_1(A-\Delta A_r)$. Since $y$ and $z^{\top}$ are normalized, i.e., $z^{\top}y=1$, we can obtain the expression of $\Delta \lambda_1(A, \Delta A_r)$ by rewritten $z^{\top} \Delta A_r y$ in a component-wise manner.
\end{proof}

\subsection{Distributed Algorithm Design}
To implement the algorithm in a distributed way, we firstly introduced distributed estimation of the eigenvectors. Based on the estimated eigenvectors, we carried out the removal algorithm. 
Power iteration (PI) is a common method to estimate dominant eigenvalue.The eigenvector  corresponding to the dominant eigenvalue of matrix $C$ is given by
\begin{equation} \label{eq::pi}
    \hat{\xi}(t+1)=\frac{C \hat{\xi}(t)}{\|C \hat{\xi}(t)\|},
 \end{equation}
where $\hat{\xi}(t)$ is estimation at step $t$. It worth noting that, power iteration demand a primitive matrix $C$, which may not be the case for defined matrix $A$. So we set $C=I+A$ to acquire the eigenvector. 

A problem for the distributed complement would be the normalization in \eqref{eq::pi} at each iteration step. To get $\|C \hat{\xi}(t)\|$, global information is needed. Therefore, we use a $\max$-consensus protocol to help getting the 
$\hat{\xi}(t)$ converged, not   normalized though.

For $\hat{y}(t)=[\hat{y}_1(t),...,\hat{y}_n(t)]^{\top}$, each node $i$ has access to its own value $\hat{y}_i(t)$ and its neighbours' value $\hat{y}_j(t)$, $j \in {N}_i$. Then the PI in \eqref{eq::pi} can be modified as
\begin{equation}\label{eq::eig_1}
    \hat{y}_i(t+1)=k_i(t) \left( \hat{y}_i(t) + \sum_{j=1}^{n} w_{ij} \hat{y}_j(t)\right),
\end{equation}
where $k_i(t)$ helps in the convergence of $\hat{y}_i(t+1)$ and can be achieved in the following $\max$-consensus way. Firstly, we calculate a candidate locally, as
\begin{equation}
 \begin{aligned}
   h_i(t+1) &=\frac{1}{\hat{y}_i(t)}\left[\hat{y}_i(t) + \sum_{j=1}^{n} w_{ij} \hat{y}_j(t) \right].
 \end{aligned}
\end{equation}
Then, every node shares this value with its neighbors and choose the max value from the values of its neighbors' and itself's:
\begin{equation}
 p_i(t+t_s+1) = \max_{j \in \mathcal{N}^{\mathrm{in}}_i} p_j(t + t_s), p_i(t) = h_i(t),
\end{equation}
where $t_s \in \mathbb{N}_{\geq 0}$. Terminated at a mixing time $T_d$, which means at time $t+T_d$ a $\max$-consensus is reached that
\begin{equation}
 p_1(t+T_d)=\cdots=p_n(t+T_d) = \max_{j\in\mathcal{V}} h_j(t).
\end{equation}
Setting $ k_i(t)=\frac{1}{p_i(t)}$, $ k_i(t)$ can be formulated as
\begin{equation}\label{eq::eig_2}
    k_i(t) =  \frac{1}{\max_{j\in\mathcal{V}} h_j(t-T_d)}.
\end{equation}
\begin{theorem}
Given a connected graph $\mathcal{G}$, the eigenvector $y$ corresponding to $\lambda_1(A(\mathcal{G}))$ can be computed distributively by repeating steps \eqref{eq::eig_1}-\eqref{eq::eig_2}.
\end{theorem}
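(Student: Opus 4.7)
The plan is to reduce the distributed scheme to a normalized power iteration on the primitive matrix $C=I+A$, and then invoke the standard convergence result from Perron--Frobenius theory.

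First, I would observe that once the max-consensus sub-routine terminates, every node holds the same value $p_i(t+T_d)=\max_{j\in\mathcal{V}} h_j(t)$. Substituting into~\eqref{eq::eig_2} shows that the scaling factors are uniform across the network: $k_i(t)=k(t):=1/\max_{j\in\mathcal{V}} h_j(t-T_d)$ for all $i$. The local rule~\eqref{eq::eig_1} then aggregates into the global recursion $\hat{y}(t+1)=k(t)\,C\,\hat{y}(t)$, which is precisely a normalized power iteration with the iteration-dependent scalar $k(t)$ replacing the usual $1/\|C\hat{y}(t)\|$.

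Next, I would exploit the spectral structure of $C$. Since $\mathcal{G}$ is strongly connected, $A$ is irreducible and nonnegative, so $C=I+A$ has strictly positive diagonal and is therefore primitive. By Lemma~\ref{lemma::perron_fr}, $\lambda_1(C)=1+\lambda_1(A)$ is simple, dominates the moduli of all other eigenvalues strictly, and corresponds to the same right eigenvector $y\gg\mathbf{0}$ as $\lambda_1(A)$. Unrolling the recursion gives
\begin{equation*}
    \hat{y}(t)=\Bigl(\prod_{s=0}^{t-1}k(s)\Bigr)\,C^{t}\hat{y}(0),
\end{equation*}
so that as long as the product of scalars neither blows up nor collapses to zero, the direction of $\hat{y}(t)$ is the direction of $C^{t}\hat{y}(0)$, which converges to the $y$-direction by primitivity (for any initialization $\hat{y}(0)\gg\mathbf{0}$, which has nonzero projection onto $y$).

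The technical core is verifying that $k(t)$ provides the right scaling. I would invoke the Collatz--Wielandt characterization: the local ratios $h_j(t)=[C\hat{y}(t{-}1)]_j/\hat{y}_j(t{-}1)$ satisfy $\min_j h_j(t)\leq \lambda_1(C)\leq \max_j h_j(t)$, and both bounds converge to $\lambda_1(C)$ as $\hat{y}(t)$ aligns with $y$. Therefore $k(t)\to 1/\lambda_1(C)$, the telescoping product stays bounded away from $0$ and $\infty$, and $\hat{y}(t)$ converges to a positive scalar multiple of $y$.

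The main obstacle will be the $T_d$-step delay between the local update and the max-consensus normalization: the scaling at step $t$ uses information from step $t-T_d$. I would argue that during a consensus window the state either remains frozen or is updated with an outdated but still monotonically improving scalar, and that the asymptotic identity $k(t)\to 1/\lambda_1(C)$ is unaffected because $h_j(\cdot)$ converges. A secondary subtlety is ensuring $\hat{y}(t)\gg\mathbf{0}$ for all $t$ so that the division defining $h_i(t+1)$ is well posed; this follows by induction from primitivity of $C$ together with positivity of $k(t)$.
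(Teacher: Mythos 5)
Your proposal follows essentially the same route as the paper's proof: reduce the scheme to a normalized power iteration on the primitive matrix $C=I+A$, and use the fact that the maximum of the local Collatz--Wielandt ratios $h_j(t)$ converges to $\rho(C)$ so that $k(t)\to 1/\lambda_1(C)$. The paper states these two ingredients and concludes immediately, whereas you additionally verify the details it glosses over (uniformity of $k_i(t)$ after consensus, positivity of the iterates, boundedness of the telescoping product, and the $T_d$ delay), so your argument is a more careful version of the same proof.
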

\begin{proof}
According to \cite{wood2003always}, if irreducible $C\geq0$ is primitive,
\begin{equation}
\lim_{t \to +\infty}\max_{i}h_i(t)=\rho(C).
\end{equation}
Meantime, the power iteration \eqref{eq::pi} guarantee a compact form converge to the true eigenvector corresponding to the largest eigenvalue of adjacency matrix. In this way, 
\begin{equation}
    \lim_{t \rightarrow \infty}  \hat{y}(t+1)=\lim_{t \rightarrow \infty} \frac{C\hat{y}(t)}{\lambda_{1}(C)}=y(t)
\end{equation}
can be achieved.
\end{proof}
\begin{remark}
 In the proposed link removal algorithm, the using of Lemma \ref{lemma::perron_fr} calls for an irreducible non negative $A_r$, which equals to a strongly connected $W_r$, with the definition in \eqref{eq::def_1}. That can be a strong assumption for a matrix. However, as globalized we are these days, it is nearly impossible for any city or sub-population to stay cut off physically from the outside world. Especially in the severe pandemic situation, necessary medical and living materials must be sent by people. Therefore, no vertex is supposed to be isolated, and the assumption of a strongly connected adjacency matrix after removal is reasonable and necessary. 
\end{remark}

\section{Simulations}
\subsection{Parameter Learning via COVID-19 data in Germany}
We identified the propagation network of the COVID-19 virus consisting of 16 nodes. Each node represents a federal state in Germany. The reasons to use the data in Germany are two folds. Firstly, Germany closed the borders with neighbors on March $15$th\cite{bund.de}. Considering that the incubation period of COVID-19 is up to 14 days, the network of federal states in Germany in April and May can be viewed as isolated, i.e., infection from external nodes (other countries) is excluded. Secondly, Germany guaranteed sufficient testing capacity and numerous intensive beds, which is far below the upper limit of the healthcare resource they can provide. Therefore, the data of Germany can well interpret the infection characteristic of the virus.

We use the infection data of the 16 federal states of Germany from \cite{timmurphy.org}. The identified network is a weighted asymmetric one consisting of 16 notes. From the geographical point of view, not all the federal states are adjacent to each other. However, the network can be treated as nearly fully-connected due to the logistic, business/personal trips, etc..

Because the data from \cite{timmurphy.org} is published once a day, the propagation network is identified based on the following discrete-time model.
\begin{equation} 
    \begin{aligned}
     x[k+1]-x[k] &=(I-X[k]-R[k])BW x[k]-X[k] \delta\\
    r[k+1] -r[k] &=X[k] \delta,
    \end{aligned}
\end{equation}
which can be further simplified in linear parameterization form
\begin{equation}
    \xi[k]=\phi[k]^{\top}\theta^*
\end{equation}
with
\begin{equation}
\begin{aligned}
\phi[k]^{\top}=
 \left[\begin{array}{cc}
    x[k]^{\top} \otimes(I-X[k]-R[k]) & -X[k] \\
     0&X[k] 
\end{array}
   \right],
\end{aligned}
\end{equation}
and 
\begin{equation}
\begin{aligned}
\theta^*=\left[\begin{array}{c}
     \text{vec}(BW)\\
     d 
\end{array}
\right], \quad
\xi[k]=\left[\begin{array}{c}
     x[k+1]-x[k]\\
     r[k+1]-r[k] 
\end{array}\right],
 \end{aligned}
\end{equation}
where $\otimes$ represents the Kronecker product, $\text{vec}(BW)$ denotes the vectorization of matrix $BW$ and $I \in \mathbb{R}^{n\times n}$ is an identity matrix. We see that the network structure $BW$ and the curing rates $d$ are stacked into the parameter vector $\theta^*$. The identification of the propagation network of the virus can be formulated as a constrained optimization problem as follows
\begin{align}
\begin{split}
\label{eqn: ls}
    \theta^{*}= &\arg\min_{\theta} \frac{1}{2} \left\lVert
    \Phi^{\top}\theta - \Xi
    \right\lVert^{2}_2\\
    &\text{s.t.  } 0\leq \theta_i\leq 1, i=1,\cdots,n(n+1)
\end{split}
\end{align}
with $\Phi=[\phi[k],\phi[k+1],\cdots,\phi[k+N]]$ being the regressor matrix and $\Xi=[\xi[k],\xi[k+1],\cdots,\xi[k+N]]$ being the vector containing the data of daily increase in infected cases. $N \in \mathbb{N}$ is the number of utilized data. Since the network consists of 16 nodes, the data of at least 17 days is required to ensure the full rank of the regressor matrix. In our simulation, we utilize the data of 25 days from May 24th to April 18th, i.e., $N=25$. The optimal parameter $\theta^*$ is obtained by adopting $\mathsf{lsqlin}$ function of Matlab with interior-point algorithm. In reality, the recorded data is not perfect because of various reasons such as the delay of reporting cases and uncertain incubation periods. The identified network is the nearest solution to the real network, which satisfies the constrain in (\ref{eqn: ls}).

\begin{figure}[ht]
\centering
\includegraphics[width =1 \linewidth]{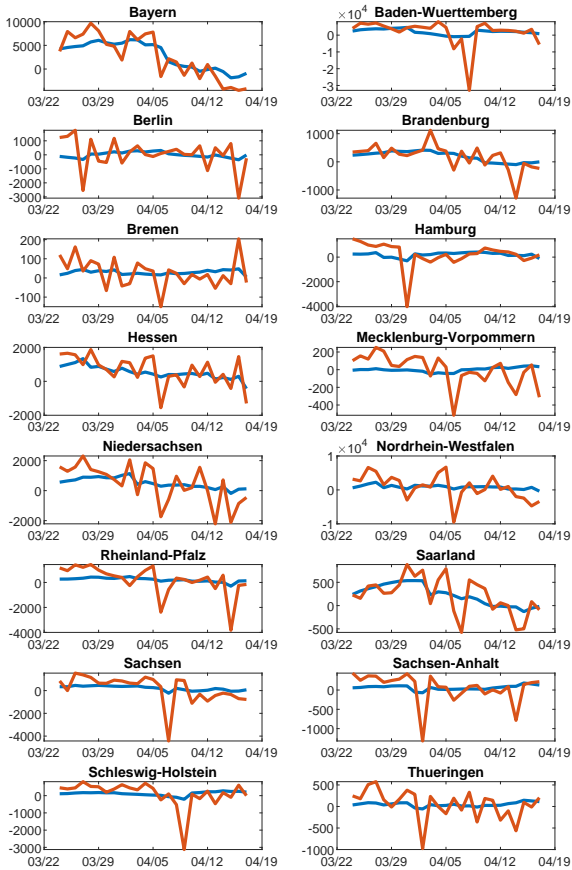}
\caption{The approximation of daily increment of infection cases in each state of Germany. The red line and blue line correspond to the real data and approximation, respectively.}
\label{fig_bayern}
\end{figure}

\subsection{Implementation of the Link Removal Strategy}
 We slightly abuse the network obtained in Section IV.A by setting the edges weight less than 0.001 to 0. This results in a strongly connected digraph and better validates our algorithm. Based on this processed graph, we initially test the estimation of the principal eigenvector, which turns out to converge well as shown in Fig. \ref{fig_vector}. The colorful $\hat{y}_i$ grows for several iterations and finally converge to the corresponding component of the principal eigenvector represented by the black dash line.

To verify the effectiveness of the proposed algorithm on solving problem \eqref{eq::opt_3}, we compare the dominant eigenvalue $\lambda_{1}(A_r)$ of the proposed removal algorithm with a random removal strategy. It worth noticing that the estimation of the dominant eigenvalue after every removal is not necessary. To remove more than one edge after an estimation is possible and can conserve computational efforts. With this concern, we set different step-size in our simulation. The Fig. \ref{fig_removal} shows that, regardless of the step-size, the proposed algorithm significantly reduces the dominant eigenvalue of $A$, so that much better performance is achieved. As for the results of different step-sizes, similar effects are achieved by 1 edge per step and 2 edges per step. Things are the same for 5 edges per step and 10 edges per step, which show little worse performance than that of 1 and 2 edges per step. In all, to minimize the step enhances the performance, but the difference is not significant if the gap between step-size settings is not huge. The links directed to Berlin and Mecklenburg-Vorpommern are more likely to be removed. That would possibly because of the high weights of links pointing to them, as the removal of fast-spreading link may helps in containing the pandemic.

 We carry an experiment in one-step scenario and further test the connectivity by abandoning the proposal that harm connectivity. In Fig. \ref{fig_removal}, the result is marked by the red cross, the line of which is only slightly different with that of blue spots representing removal without connectivity guarantee. It is possibly because the graph is well connected. Despite of the removal we have made, the connectivity is not damaged. 
 
 With the edges-removed graph, we make an estimation of the COVID-19 epidemic and compare the result with that of the approximated graph and the real data. In Fig. \ref{fig_contrast}, the decreasing rate of infection number shows obvious difference among the three lines. The real infection number drops faster than the estimation with identified parameters, which may attribute to the more experienced treatment and improving social distancing awareness. Moreover, the estimation with graph processed by the proposed algorithm shows a steeper decline than real data, which verifies that our network-based method is able to control the pandemic in an effective way.
 
\begin{figure}[ht]
\centering
\includegraphics[width =1 \linewidth]{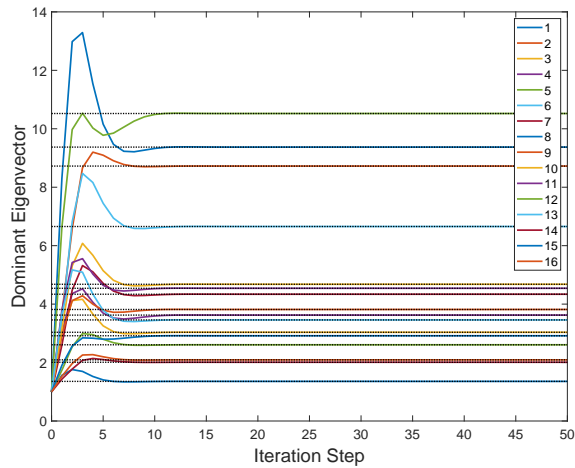}
\caption{Distributed estimation with power iteration.}
\label{fig_vector}
\end{figure}


\begin{figure}[ht]
\centering
\includegraphics[width =1 \linewidth]{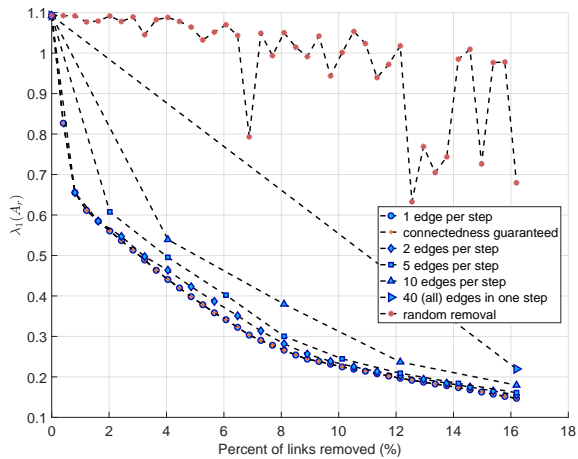}
\caption{Performance of the proposed removal algorithm with different step sizes.}
\label{fig_removal}
\end{figure}

\begin{figure}[ht]
\centering
\includegraphics[width =1 \linewidth]{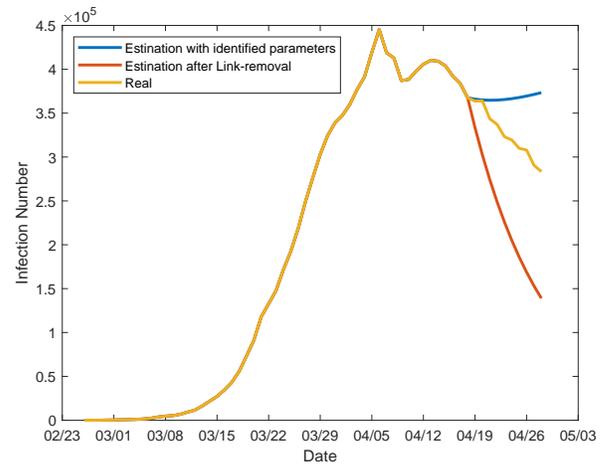}
\caption{The performance of the link-removal algorithm on the control of the COVID-19 spreading process.}
\label{fig_contrast}
\end{figure}

\section{Conclusion} \label{sec::conclusion}
In this paper, we propose a distributed link removal strategy for the network meta-population SIR epidemics. Compared with the previous work, this approach enjoys a more general setting where the investigated network is relaxed to be a weighted digraph. From practical point of view, the proposed approach is applied to the scenario of curbing the COVID-19 spreading by using infection and recovered cases in each state of Germany. The simulation illustrates the effectiveness of the proposed approach. Future work will focus on developing data-driven distributed topology manipulation strategies to control network epidemic spreading processes.

\bibliography{mybib.bib}
\bibliographystyle{IEEEtran}


%





\end{document}